\documentclass[11pt]{article}

\usepackage[utf8]{inputenc}
\usepackage[T1]{fontenc}
\usepackage[english]{babel}
\usepackage{amsmath,amssymb,amsthm}
\usepackage{booktabs}
\usepackage{graphicx}
\usepackage{multirow}
\usepackage{array}
\usepackage[margin=1in]{geometry}
\usepackage{tikz}
\usepackage{pgfplots}
\usepackage{algorithm}
\usepackage{algpseudocode}
\usepackage{caption}
\usepackage{subcaption}
\usepackage[numbers,sort&compress]{natbib}
\usepackage[colorlinks=true,linkcolor=blue,citecolor=blue,urlcolor=blue]{hyperref}
\usepackage{cleveref}

\pgfplotsset{compat=1.17}
\usetikzlibrary{positioning,arrows.meta,calc,fit,backgrounds,shapes.geometric,decorations.pathreplacing}

\newtheorem{proposition}{Proposition}
\newtheorem{remark}{Remark}
\newtheorem{definition}{Definition}

\DeclareMathOperator*{\argmax}{arg\,max}
\DeclareMathOperator*{\argmin}{arg\,min}

\newcommand{\Dset}{\mathcal{D}}

\tikzset{
  stage/.style={draw, rounded corners=2pt, align=center, font=\small,
                minimum height=8mm, inner xsep=4pt, fill=black!3},
  gate/.style={draw, rounded corners=2pt, align=center, font=\small,
               minimum height=8mm, fill=black!8},
  nd/.style={draw, rounded corners=2pt, align=center, font=\scriptsize,
             minimum height=7mm, fill=black!5},
  flow/.style={-{Latex[length=2mm]}, thick},
  raw/.style={-{Latex[length=2mm]}, thick, dashed},
  lbl/.style={font=\scriptsize\itshape}
}

\title{\bfseries Self-Explaining Segment Trees: A KPI-Conditioned Segmentation
Framework for Business Analytics with Node-Level Explanation via Recursive
Subspace Partitioning}

\author{%
\normalsize
\begin{tabular}{@{}cc@{}}
Girish G N & Dhanashekar Kandaswamy \\[1mm]
{\small\itshape Head of AI/ML} &
{\small\itshape The Ohio State University} \\
{\small Zeru Finance} &
{\small USA}
\end{tabular}%
}

\date{}

\begin{document}

\maketitle

\begin{abstract}
Business users confronted with a moving metric need to know which part of their
data moved and why. The data-explanation literature answers with predicates:
conjunctions of attribute-value conditions, searched over the lattice of such
conjunctions, that isolate the responsible records. Predicates are exact and
execute directly as filters, but they describe only axis-aligned regions of the
raw attribute space, and no compact conjunction captures a segment defined by a
combination of continuous tendencies rather than by any single attribute value.
This paper presents Self-Explaining Segment Trees, an architecture that adopts
the opposite representation: an explanation is a multivariate cluster located in
a feature subspace selected for relevance to one designated key performance
indicator. The framework selects that subspace once per KPI from Shapley
attributions over a decision-tree surrogate, partitions the population
recursively while choosing the branching factor independently at every node
through mixture-model silhouette search, and attaches to every node---internal
nodes as well as leaves---a dual payload of standardized effect sizes over all
numeric features and type-dependent contribution profiles over user-designated
dimensions, both computed against untransformed data so that the surfaced values
carry the user's own units and category labels. A stance layer reduces any depth
of the tree to its extremal KPI-suppressing and KPI-amplifying segments. We
establish that the tree is a partial hierarchical decomposition whose node count
is bounded by the depth limit and the minimum segment size, that construction
terminates, and that per-tree cost is quadratic in population size in the
degenerate case and geometrically decaying across depth in the balanced case.
This is an architecture and methodology paper. We report no predictive-accuracy
or validation results, and \Cref{sec:limitations} states outcome validation as
future work.
\end{abstract}

\section{Introduction}
\label{sec:intro}

Self-service analytics platforms promise that a non-technical user can answer
their own questions about their own data. Adoption studies report the promise
only partly met: users lack the analytical training and the cognitive budget to
explore data in depth, generate their own analyses, or interpret model output,
and system complexity compounds both gaps \citep{ssbi}. A commercial director
can read that revenue fell. Determining where it fell, and what characterizes
the segment responsible, remains an analyst's job.

The database and data-mining communities have worked this problem for over two
decades under the heading of data explanation. Sarawagi et al.\ introduced
discovery-driven exploration, precomputing exception indicators at every
aggregation level of a cube so anomalies surface rather than being hunted
\citep{sarawagi1998}, and subsequently formalized explaining differences between
multidimensional aggregates \citep{sarawagi1999}. Scorpion works backward from
an outlier in an aggregate query to the predicate over input tuples responsible
for it \citep{scorpion}. Smart drill-down replaces the OLAP drill-down operator
with one surfacing rules that summarize interesting groups of tuples
\citep{smartdrill}. The \textsc{Diff} operator unifies the family into a single
relational primitive, demonstrating that many explanation engines are instances
of one aggregation operator parameterized by a difference metric \citep{diff}.

A single architectural commitment runs through all of them: an explanation is a
predicate. The output is a conjunction of conditions over raw attributes, and
the computational problem is searching that conjunction lattice efficiently. The
commitment buys exactness, direct translation into a filter, and immediate
actionability. It also fixes what can be described. A predicate carves
axis-aligned regions of the raw attribute space and nothing else. A segment
characterized by moderately elevated order frequency, moderately depressed
discount rate, and moderately extended payment cycle---no one of which crosses a
threshold any single condition would isolate---has no compact conjunctive
description. Predicate search either misses it or returns a disjunction of
narrow rules that reads as noise rather than as one insight.

This paper describes an architecture built on the opposite commitment: an
explanation is a cluster. Self-Explaining Segment Trees (SEST) produce a
recursive decomposition in which every node is a multivariate region of a
feature subspace learned for one specific KPI, carrying a machine-generated
description of what distinguishes it. Three decisions separate the architecture
from generic hierarchical clustering. The subspace is selected per KPI rather
than per dataset, so a dataset with three KPIs yields three structurally
distinct trees and the segmentation is conditioned on the question rather than
on the data alone. The branching factor is selected per node rather than
globally, so regions that decompose into different numbers of subgroups are
permitted to do so. Explanation is computed at construction time for every node
against untransformed data, so any node a user reaches during traversal is
immediately readable in the units and category labels of the source table.

The contributions are as follows.

\begin{itemize}

\item \textbf{A cluster-based formulation} of KPI root-cause decomposition
(\Cref{sec:problem}), stated against the predicate-based formulation of prior
work \citep{sarawagi1998,sarawagi1999,scorpion,smartdrill,diff} with the
representational trade-off characterized in both directions
(\Cref{sec:position}).

\item \textbf{The SEST construction} (\Cref{sec:method}): KPI-conditioned
subspace selection over a surrogate model, recursive partial partitioning with
per-node model-order selection, and construction-time dual explanation, given as
algorithms and grounded line by line in the deployed implementation.

\item \textbf{Structural guarantees} (\Cref{sec:properties}).
\Cref{prop:size} bounds node count by the depth limit and minimum segment size
and establishes termination; \Cref{prop:cost} gives the construction cost model
and separates the degenerate and balanced regimes. \Cref{rem:coverage} states
the coverage deficit the partial-partition property admits and which the
guarantees do not bound.

\end{itemize}

This is an architecture and methodology paper. No predictive-accuracy claim,
segmentation-quality measurement, or comparative evaluation appears anywhere in
it. \Cref{sec:limitations} sets out what outcome validation would require.

\section{Related Work}
\label{sec:related}

\subsection{Explaining Aggregates and Data Cubes}

Discovery-driven exploration precomputes exception indicators across all
aggregation levels of a cube, guiding attention by visual cues on cells whose
values depart from expectation \citep{sarawagi1998}. The follow-on work
formalizes explaining differences between two multidimensional aggregates by
locating the cell pairs contributing most to the gap \citep{sarawagi1999}. Both
operate over a pre-declared dimension hierarchy, which fixes the granularity at
which an explanation can be expressed before analysis begins.

Scorpion reformulates the task as a search for predicates whose removal from the
input restores an aggregate to its expected value, giving an explicitly causal
reading to the returned conjunction \citep{scorpion}. Smart drill-down replaces
the standard drill-down operator with one returning rules containing wildcards,
so that a single rule summarizes a group of tuples at a chosen level of
specificity \citep{smartdrill}. The \textsc{Diff} operator generalizes the
family into a relational aggregation primitive parameterized by a difference
metric, and demonstrates that several published explanation engines are
recoverable as instances \citep{diff}.

These systems answer \textit{which records are responsible}. The present work
answers a different question---\textit{what characterizes the population that is
responsible}---and accepts a weaker form of answer in exchange for a wider class
of describable segments.

\subsection{Subgroup Discovery and Exceptional Model Mining}

Subgroup discovery seeks subsets of a dataset, describable by conditions on
attributes, within which a target concept deviates from the population
\citep{sd}. Exceptional model mining generalizes the target from a single
attribute to a model fitted over several attributes, so that a subgroup is
exceptional when the fitted model differs rather than when a mean shifts
\citep{emm}. Both frameworks are target-aware in the sense SEST is: the search
is driven by deviation in a designated quantity rather than by unsupervised
structure alone.

The divergence is representational. Subgroup discovery retains attribute
conditions as its description language and inherits the search-space structure
that follows. SEST abandons that language, which forfeits the exhaustive-search
guarantees and quality-measure theory the subgroup-discovery literature has
developed \citep{sd}, and gains the ability to describe segments no condition
set delimits.

\subsection{Explainable Clustering}

Moshkovitz et al.\ formalize explainable clustering as approximation of an
optimal $k$-means or $k$-medians solution by a decision tree with $k$ leaves,
and prove an $\Omega(\log k)$ lower bound on the cost of that explainability
\citep{exkmeans}; subsequent work tightens the achievable bounds
\citep{exkmeans2}. A recent survey organizes the field into decision-tree,
rule-based, prototype-based, and polyhedral explanation models and documents a
recurring tension in which interpretability gains cost clustering quality
\citep{icsurvey}.

That line explains a clustering that already exists, and the explanation is
again a tree of attribute conditions. SEST inverts the dependency: the target
conditions the clustering before it is computed, and the explanation is an
effect-size ranking over the original feature space rather than a partition
rule. The two are complementary---a decision-tree summary could be fitted to a
SEST node---and \Cref{sec:limitations} treats that combination as open work.

\subsection{Component Techniques}

The constituent methods are established. Shapley additive explanations unify a
family of additive attribution measures and identify the unique solution in that
class satisfying a stated set of properties \citep{shap}; SEST uses them as a
selection filter over features rather than as per-prediction explanations. The
silhouette coefficient scores a partition by comparing within-cluster tightness
against separation from the nearest alternative cluster \citep{silhouette}, and
serves here as the internal criterion for per-node order selection. Ward's
method merges the cluster pair minimizing the increase in within-cluster
variance at each step \citep{ward}, and supplies the partition once the order is
fixed. Implementation rests on standard estimators \citep{sklearn}.

\subsection{Positioning}
\label{sec:position}

\Cref{tab:priorwork} states the representational difference directly. Every
prior family listed expresses explanations in a description language of
attribute conditions. The trade-off is genuine and runs in both directions. A
predicate is exact, reproduces as a filter, and requires no interpretation; a
cluster is approximate, requires reading an effect-size ranking, and cannot be
replayed as a query. Where the responsible population is axis-aligned---which in
transactional business data it frequently is---the predicate formulation is the
better instrument. SEST addresses the residual class, and the exploratory
setting in which no specific outlier has yet been identified and what is wanted
is a navigable decomposition rather than an answer to a posed question.

\begin{table}[t]
\centering
\caption{Explanation representation across prior families and in SEST.}
\label{tab:priorwork}
\begin{tabular}{lll}
\toprule
Approach & Explanation form & Describable segments \\
\midrule
Cube exploration \citep{sarawagi1998}   & Flagged exception cells      & Cells of a declared hierarchy \\
Aggregate differences \citep{sarawagi1999} & Contributing cell pairs   & Cells of a declared hierarchy \\
Scorpion \citep{scorpion}               & Attribute predicate          & Axis-aligned conjunctions \\
Smart drill-down \citep{smartdrill}     & Rules with wildcards         & Axis-aligned conjunctions \\
\textsc{Diff} \citep{diff}              & Predicate sets               & Axis-aligned conjunctions \\
Subgroup discovery \citep{sd,emm}       & Attribute conditions         & Axis-aligned conjunctions \\
Explainable $k$-means \citep{exkmeans}  & Decision tree over clusters  & Axis-aligned, post hoc \\
\midrule
SEST (this work)                        & Cluster in learned subspace  & Arbitrary multivariate regions \\
\bottomrule
\end{tabular}
\end{table}

\section{Methodology}
\label{sec:method}

\Cref{fig:pipeline} gives the end-to-end architecture. The remainder of this
section develops each stage.

\begin{figure}[t]
\centering
\begin{tikzpicture}[x=1mm,y=1mm]

\node[stage,text width=26mm] (src)  at (0,0)   {Source table\\{\scriptsize columnar store}};
\node[gate, text width=36mm] (prep) at (38,0)
  {Preprocessing\\{\scriptsize impute $\cdot$ scale $\cdot$ ordinal encode}};
\node[stage,text width=26mm] (cfg)  at (78,0)  {Column roles\\{\scriptsize KPI set, dimensions $I$}};
\draw[flow] (src) -- (prep);
\draw[flow] (cfg) -- (prep);

\node[stage,text width=54mm] (sel) at (20,-22)
  {Subspace selection, once per KPI\\[0.6mm]
   {\scriptsize collinearity filter $\to$ surrogate fit $\to$ SHAP $\to$ mean cutoff}};
\draw[flow] (prep.south) -- ++(0,-6) -| (sel.north);

\node[stage,text width=30mm] (ord)  at (10,-46) {Order search\\{\scriptsize GMM, silhouette}};
\node[stage,text width=30mm] (part) at (48,-46) {Partition\\{\scriptsize Ward linkage at $k^\star$}};
\node[stage,text width=30mm] (disc) at (48,-62) {Size gate\\{\scriptsize discard $<s_{\min}$}};
\begin{scope}[on background layer]
\node[draw,dashed,rounded corners=2pt,fit=(ord)(part)(disc),inner sep=3mm,fill=black!2]
  (rec) {};
\end{scope}
\node[font=\scriptsize\bfseries,anchor=south east,yshift=0.4mm,
      fill=white,inner sep=1pt] at (rec.north east)
  {Recursive descent, every node};

\draw[flow] (sel.south) -- ++(0,-6) -| (ord.north);
\draw[flow] (ord) -- (part);
\draw[flow] (part) -- (disc);
\draw[flow] (disc.west) -- ++(-13,0) |- node[lbl,pos=0.25,left,align=right]
  {recurse} (ord.west);

\node[gate,text width=32mm] (expl) at (92,-46)
  {Node explanation\\{\scriptsize $Z_v$ effect sizes, $\Phi_v$ profiles}};
\draw[flow] (part.east) -- (expl.west);
\draw[raw] (src.north) |- (120,15) -- (120,-46) -- (expl.east);
\node[lbl,anchor=south] at (52,15.8) {raw, untransformed};

\node[stage,fill=black!10,text width=44mm] (out) at (52,-82)
  {Serialized segment tree $T$ per KPI};
\draw[flow] (disc.south) -- ++(0,-8) -| (out.north);
\draw[flow] (expl.south) |- (out.east);

\end{tikzpicture}
\caption{SEST construction pipeline. Subspace selection runs once per KPI;
order selection, partitioning, and the size gate recur at every node. The dashed
edge carries untransformed source values to the explanation stage, so no
user-facing quantity is expressed in scaled or ordinal-encoded coordinates.}
\label{fig:pipeline}
\end{figure}

\subsection{Data Model and Preprocessing}
\label{sec:data}

Let $\Dset = \{x_1,\dots,x_n\}$ be a table over a mixed attribute space
$A_1 \times \cdots \times A_m$, each $A_j$ numeric or categorical. One numeric
attribute $y$ is designated the KPI. A set $I$ of dimensions of interest is
designated by the user and is profiled at every node regardless of its
statistical relevance to $y$. Attributes the user marks as irrelevant are
dropped before any computation.

Numeric attributes are mean-imputed and min--max scaled; categorical attributes
are mode-imputed and ordinal-encoded, through a single column-wise transformer.
The encoded table serves one purpose, distance computation during partitioning,
and the original table is retained alongside it. Every quantity that reaches the
user is computed from the original table. Ordinal encoding of nominal attributes
imposes an artificial order and therefore an artificial distance geometry; the
choice reflects the deployed implementation rather than a defended modeling
position, and \Cref{sec:limitations} records it.

\subsection{Problem Statement}
\label{sec:problem}

SEST constructs a rooted tree $T=(V,E)$ with root $v_0$, each node $v$
associated with a row subset $\Dset_v \subseteq \Dset$, subject to four
conditions.

\begin{definition}[Segment tree]
\label{def:tree}
$T$ is a \emph{segment tree} for $(\Dset, y, I)$ when:
\textup{(i)} $\Dset_{v_0} = \Dset$;
\textup{(ii)} for every internal node $v$ with child set $C(v)$, the children are
pairwise disjoint and contained in the parent,
\begin{equation}
\Dset_a \cap \Dset_b = \emptyset \quad \forall a \neq b \in C(v),
\qquad
\bigcup_{c \in C(v)} \Dset_c \subseteq \Dset_v ;
\label{eq:partial}
\end{equation}
\textup{(iii)} $v$ is a leaf whenever $\mathrm{level}(v) \geq d_{\max}$ or
$|\Dset_v| < s_{\min}$; and
\textup{(iv)} every node carries an explanation $E_v = (Z_v, \Phi_v)$ as
constructed in \Cref{sec:explain}.
\end{definition}

The containment in \Cref{eq:partial} is proper in general. Children falling
below $s_{\min}$ are discarded rather than retained, on the position that a
segment too small to characterize reliably is worse than no segment. $T$ is
therefore a partial hierarchical decomposition, and row-level accounting across
a level must account for the discarded remainder. \Cref{rem:coverage} returns to
this.

A fifth condition distinguishes the construction from unsupervised hierarchical
clustering. The subspace within which each partition is computed is a function
of $y$: there exists a selection operator $S$ such that partitioning at every
node operates on $\pi_{S(\Dset,y)}(\Dset_v)$, the projection onto the selected
subspace. $T$ therefore depends on $y$, and distinct KPIs over one table induce
structurally distinct trees.

\subsection{KPI-Conditioned Subspace Selection}
\label{sec:select}

\Cref{alg:select} states the selection procedure. A collinearity filter removes
one member of each strongly correlated numeric pair. A decision-tree surrogate
is fitted to predict $y$, with the task type inferred from the KPI's own
distribution: regression when $y$ is numeric and carries more than a small fixed
number of distinct values, classification otherwise. Shapley attributions over
the fitted surrogate \citep{shap} rank the surviving features, and those at or
above the mean attribution are retained. The user's dimensions of interest are
unioned in unconditionally, which guarantees $\Phi_v$ is populated for the
columns the user cares about even where those columns carry no signal for $y$.

\begin{algorithm}[t]
\caption{KPI-conditioned subspace selection}
\label{alg:select}
\begin{algorithmic}[1]
\Require encoded table $\Dset'$; KPI column $y$; dimensions of interest $I$;
correlation bound $\rho$; attribution sample size $N_\phi$
\Ensure feature subspace $F \subseteq \mathrm{cols}(\Dset') \setminus \{y\}$
\State $X \gets \Dset' \setminus \{y\}$
\State $R \gets \{f : \exists\, g \neq f \text{ numeric},\;
        |\mathrm{corr}(f,g)| \geq \rho\}$
  \Comment{collinear redundancy}
\State $X \gets X \setminus R$
\State $\textit{reg} \gets \mathrm{numeric}(y) \wedge |\mathrm{unique}(y)| > \nu$
  \Comment{task type from the KPI itself}
\State $M \gets \textit{reg}\;?\;\textsc{TreeRegressor}()\;:\;\textsc{TreeClassifier}()$
\State $X_s \gets \mathrm{sample}\big(X,\ \min(N_\phi, |X|)\big)$
  \Comment{caps attribution cost independently of $n$}
\State $M.\textsc{Fit}\big(X_s,\; y[\mathrm{index}(X_s)]\big)$
\State $\phi \gets \textsc{ShapAttributions}(M, X_s)$
  \Comment{mean $|\phi_f|$ per feature}
\State $F \gets \{f \in X : \phi_f \geq \mathrm{mean}(\phi)\}$
\State $F \gets F \cup I$
  \Comment{user dimensions retained unconditionally}
\State \Return $F$
\end{algorithmic}
\end{algorithm}

Two properties of \Cref{alg:select} carry consequences downstream. Line~6 bounds
attribution cost at a constant independent of $n$, which keeps selection off the
critical path as the table grows and is the reason selection does not appear in
the cost model of \Cref{prop:cost}. Line~9 applies a mean-attribution cutoff, a
threshold chosen for its simplicity rather than derived from a test of
significance; no permutation test or false-discovery control is applied.

The procedure runs once per KPI, at the root, and $F$ is then fixed for the
entire tree. A feature discriminative only within an already-isolated
subpopulation cannot enter the subspace at any depth. This is the most
consequential structural limitation of the design and \Cref{sec:limitations}
treats it as such.

\subsection{Recursive Partitioning with Per-Node Order Selection}
\label{sec:partition}

\Cref{alg:build} states the descent. At each node the branching factor is chosen
by fitting a Gaussian mixture for every candidate order in a fixed range and
scoring the resulting assignment by mean silhouette \citep{silhouette}. The
selected order then parameterizes a separate Ward-linkage agglomerative fit
\citep{ward}, whose labels define the actual partition. Children below $s_{\min}$
are discarded at line~13, which is where the partial-partition property of
\Cref{eq:partial} originates.

\begin{algorithm}[t]
\caption{Segment tree construction, \textsc{Build}$(\Omega, \ell, p)$}
\label{alg:build}
\begin{algorithmic}[1]
\Require row indices $\Omega$; depth $\ell$; path $p$; subspace $F$ from
\Cref{alg:select}
\Ensure node $v$ rooted at $\Omega$
\State $v \gets \textsc{Node}(\ell,\, \Omega,\, p)$
\State $v.Z \gets \textsc{EffectSizes}(\Omega)$
  \Comment{\Cref{eq:z}, computed on the raw table}
\State $v.\Phi \gets \textsc{ContributionProfile}(\Omega, I)$
  \Comment{raw table}
\If{$\ell \geq d_{\max}$ \textbf{or} $|\Omega| < s_{\min}$}
  \State \Return $v$
    \Comment{leaf by \Cref{def:tree}(iii)}
\EndIf
\State $S \gets \textit{encoded}[\Omega,\, F]$
\State $k^\star \gets \displaystyle\argmax_{k \in [2,\, k_{\max}]}
       \ \mathrm{sil}\big(S,\ \textsc{Gmm}(k).\textsc{FitPredict}(S)\big)$
\State $L \gets \textsc{Ward}(k^\star).\textsc{FitPredict}(S)$
  \Comment{order and partition from different model families}
\State $v.\textit{score} \gets \mathrm{sil}(S, L)$
\For{$c \in \{0,\dots,k^\star - 1\}$}
  \State $\Omega_c \gets \{i \in \Omega : L_i = c\}$
  \If{$|\Omega_c| \geq s_{\min}$}
    \State $v.\textit{children}.\textsc{Append}\big(\textsc{Build}(\Omega_c,\,
           \ell+1,\, p \oplus c)\big)$
  \EndIf
    \Comment{else discarded; source of \Cref{eq:partial}}
\EndFor
\State \Return $v$
\end{algorithmic}
\end{algorithm}

\begin{remark}[Order selection and partition are separate fits]
\label{rem:asymmetry}
Line~8 selects $k^\star$ from a mixture model; line~9 partitions with Ward
linkage at that order. The model choosing the branching factor is not the model
producing the partition, and consequently $v.\textit{score}$ recorded at line~10
is computed on the Ward labels rather than on the mixture assignment that
justified the choice of $k^\star$. The two need not agree. The deployed
rationale is that full-covariance mixtures give a more sensitive order signal
while Ward linkage yields more compact partitions for downstream reading, and
the design is recorded here as implemented rather than as validated.
\end{remark}

\subsection{Node-Level Explanation}
\label{sec:explain}

Explanation is not a stage applied to leaves after construction. Every node
computes its payload at the moment it is created, from the original table.

The first component ranks numeric features by standardized effect size against
the population. For node $v$ and numeric feature $f$, writing $\mu_f$ and
$\sigma_f$ for the population mean and standard deviation,
\begin{equation}
z_f(v) \;=\; \frac{\mu_f(\Dset_v) - \mu_f(\Dset)}{\sigma_f(\Dset)},
\qquad
Z_v \;=\; \big\langle f : |z_f(v)| \text{ descending} \big\rangle .
\label{eq:z}
\end{equation}
The ranking covers every numeric column, not only those in $F$. A feature
excluded from the partitioning subspace may still describe the resulting segment
usefully, and the separation of the two roles is deliberate. The reference in
\Cref{eq:z} is the global population at every depth, so a node at depth four
reports deviation relative to the whole table rather than relative to its
parent, and the reported quantity mixes deviation inherited from ancestors with
deviation the node's own split produced. \Cref{sec:limitations} records the
consequence.

The second component profiles the user's dimensions of interest, with the form
determined by column type. Categorical dimensions yield the mode, the leading
and trailing categories by frequency within the segment, and the segment's
coverage of the population's category set. Continuous numeric dimensions yield
the segment mean and sum as percentages of the population's. Numeric dimensions
whose distinct-value count falls below a fixed threshold are treated as ordinal
and receive both forms. Because all three are computed on the untransformed
table, the surfaced values carry the source system's units and category labels.

\subsection{Stance Classification}
\label{sec:stance}

Any depth of the tree admits a reduction to the two segments with the largest
signed KPI deviation. For node $v$ at depth $\ell$ with parent $p$, and a
threshold ratio $\theta$,
\begin{equation}
\delta(v) \;=\; \mu_y(\Dset_v) - \mu_y(\Dset_p),
\qquad
\tau \;=\; \theta\,\big|\mu_y(\Dset_p)\big| ,
\label{eq:stance}
\end{equation}
with $v$ labelled \textsc{Suppressing} when $\delta(v) < -\tau$,
\textsc{Amplifying} when $\delta(v) > \tau$, and \textsc{Neutral} otherwise. The
layer returns $\argmin_v \delta$ and $\argmax_v \delta$ over the two non-neutral
sets.

The classification is a first-order screen. It compares means against a
proportional threshold with no significance test, no adjustment for unequal
segment variance, and no correction for the multiple comparisons implicit in
scanning every node at a depth. The deployed implementation labels the two
classes with terms whose intuitive reading is inverted relative to their
definition; the terminology in \Cref{eq:stance} is the corrected form and
differs from the source.

\subsection{Persistence and Interactive Traversal}
\label{sec:persist}

A node serializes to the record of \Cref{tab:payload}. Membership is stored as
row indices into an immutable source table, so any segment's rows are recoverable
without recomputation and no derived copy of the data is duplicated per node.
Trees are persisted whole, one per KPI.

\begin{table}[t]
\centering
\caption{Serialized node payload. Explanation fields are populated at
construction time for internal nodes and leaves alike.}
\label{tab:payload}
\begin{tabular}{lll}
\toprule
Field & Type & Role \\
\midrule
identifier      & stable key           & Node identity across sessions \\
depth           & integer              & Level in $T$ \\
indices         & integer list         & Row membership into the source table \\
path            & label sequence       & Route from the root \\
score           & real                 & Silhouette of this node's own split \\
$Z_v$           & ranked feature list  & Standardized effect sizes, \Cref{eq:z} \\
$\Phi_v$        & per-dimension record & Type-dependent contribution profile \\
KPI             & column name          & The $y$ this tree was built for \\
children        & node list            & Recursive structure \\
\bottomrule
\end{tabular}
\end{table}

The user's descent is persisted separately as an ordered traversal with a
selection index, so exploration survives across sessions and can be resumed or
replayed. That record is what distinguishes the artifact from a static report:
the tree is navigated rather than read.

\section{Architecture and System Properties}
\label{sec:properties}

\subsection{Structural Guarantees}

\begin{proposition}[Termination and node-count bound]
\label{prop:size}
Let $T$ be a segment tree per \Cref{def:tree} over $n$ rows with depth limit
$d_{\max}$ and minimum segment size $s_{\min} \geq 1$. Construction terminates,
the nodes at any depth $\ell \geq 1$ number at most $\lfloor n / s_{\min}
\rfloor$, and
\[
|V| \;\leq\; 1 + d_{\max}\left\lfloor \frac{n}{s_{\min}} \right\rfloor .
\]
\end{proposition}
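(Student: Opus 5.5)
The plan rests on two invariants of \Cref{alg:build}: every non-root node has at least $s_{\min}$ rows, and nodes at a common depth have pairwise disjoint row sets. Termination then follows from the depth cutoff, the per-level count from a packing argument, and the total bound from summing over levels.

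\textbf{Step 1 (termination).} Every recursive call at line~14 increases $\ell$ by one. The guard at line~4 returns without recursing once $\ell \geq d_{\max}$. So every call chain has length at most $d_{\max}$. Each call makes at most $k_{\max}$ recursive calls, because $k^\star \leq k_{\max}$. Each call also does finitely many finite-time fits (GMM, Ward, silhouette, effect sizes), which I take as given. The recursion tree is therefore finite, with at most $\sum_{\ell=0}^{d_{\max}} k_{\max}^{\ell}$ calls, and construction terminates. This crude bound is then improved in Steps 2--3.

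\textbf{Step 2 (per-level bound).} The first claim, by induction on depth, is that the row sets of distinct nodes at any fixed depth $\ell$ are pairwise disjoint. The base case is the single root. For the inductive step, take two distinct nodes at depth $\ell+1$. If they share a parent, \Cref{def:tree}(ii) makes them disjoint, since the Ward labels partition $\Omega$. If their parents differ, each child is contained in its parent, and the parents are disjoint by hypothesis. The second claim is that every node at depth $\ell \geq 1$ has $|\Dset_v| \geq s_{\min}$, because line~13 admits only such children. The depth-$\ell$ nodes are therefore disjoint subsets of an $n$-element set, each of size at least $s_{\min}$. Hence their number $N_\ell$ satisfies $N_\ell s_{\min} \leq n$, and since $N_\ell$ is an integer, $N_\ell \leq \lfloor n/s_{\min} \rfloor$.

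\textbf{Step 3 (total count).} No node has depth exceeding $d_{\max}$, since depth-$d_{\max}$ nodes are leaves. So $|V| = 1 + \sum_{\ell=1}^{d_{\max}} N_\ell \leq 1 + d_{\max}\lfloor n/s_{\min}\rfloor$.

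The only delicate point is the disjointness induction in Step 2. Its validity rests on children being taken from a genuine partition of the parent's indices; disjointness is not implied by containment alone. I would therefore cite the label-based definition of $\Omega_c$ in \Cref{alg:build} explicitly, and note that discarding small children only removes rows and never creates overlap. The root, of size $n$, must be excluded from the size-$\geq s_{\min}$ argument, since it may be smaller than $s_{\min}$. That exclusion is why it is counted separately as the leading $1$.
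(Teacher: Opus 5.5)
Your proof is correct and follows the paper's argument essentially step for step. Both use the same pieces: disjointness of same-depth nodes by induction on depth, the size floor $|\Dset_v| \geq s_{\min}$ from line~13 of \Cref{alg:build}, the packing bound $\lfloor n/s_{\min}\rfloor$ per level, and summation over $\ell = 1,\dots,d_{\max}$ plus the root. The differences are small refinements: you start the induction at the root rather than at depth one, and your termination step explicitly uses the finite branching $k^\star \leq k_{\max}$, which the paper's termination argument (depth cutoff plus depth increment) leaves implicit.
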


\begin{proof}
Fix $\ell \geq 1$. Nodes at depth $\ell$ are pairwise disjoint: siblings are
disjoint by \Cref{eq:partial}, and nodes under distinct parents are disjoint
because each is contained in its parent and the parents are themselves disjoint
by induction on $\ell$, with the base case $\ell=1$ given by \Cref{eq:partial}
at the root. Every node at depth $\ell \geq 1$ satisfies $|\Dset_v| \geq s_{\min}$,
since line~13 of \Cref{alg:build} admits a child only under that condition.
Disjoint subsets of $\Dset$ each of size at least $s_{\min}$ number at most
$\lfloor n/s_{\min} \rfloor$. Summing over $\ell = 1,\dots,d_{\max}$ and adding
the root gives the stated bound. Termination follows because
\Cref{def:tree}(iii) makes every node at depth $d_{\max}$ a leaf and the
recursion increments depth at each call.
\end{proof}

\Cref{prop:size} is an engineering bound before it is a mathematical one. The
serialized payload of \Cref{tab:payload} embeds an explanation at every node, so
document size scales with $|V|$ and with $|I|$, and the proposition is what
makes that size predictable from configuration rather than from data. With the
deployed defaults of \Cref{tab:params} the bound is $1 + 5\lfloor n/10 \rfloor$,
which is linear in $n$ and dominated in practice by the fact that real branching
factors fall well below $k_{\max}$.

\begin{remark}[The coverage deficit is not bounded]
\label{rem:coverage}
\Cref{prop:size} bounds node count but says nothing about coverage. Nothing in
the construction prevents every child at a node from falling below $s_{\min}$
and being discarded, in which case that node's entire population is absent from
all deeper levels while the node itself remains internal by depth. The fraction
of rows surviving to depth $\ell$ is non-increasing in $\ell$, and the deployed
implementation neither reports nor bounds it. A tree that silently drops a third
of its rows between depth one and depth four supports very different conclusions
from one that drops a fiftieth, and the two are not currently distinguishable
from the output.
\end{remark}

\subsection{Cost Model}

\begin{proposition}[Construction cost]
\label{prop:cost}
Let $K = k_{\max}$, $d = d_{\max}$, and $m = |F|$. Order selection at a node of
size $n_v$ costs $\Theta(K n_v^2 m)$, dominated by the silhouette evaluations.
Writing $V_\ell$ for the nodes at depth $\ell$, total construction cost is
\[
\Theta\!\left(K m \sum_{\ell=0}^{d-1} \sum_{v \in V_\ell} n_v^2\right),
\]
which is $\Theta(K m d\, n^2)$ when some level is dominated by a single node,
and $\Theta(K m n^2 \sum_{\ell} b^{-\ell})$ under a uniform branching factor $b$,
a geometric series bounded by $Kmn^2 \cdot b/(b-1)$.
\end{proposition}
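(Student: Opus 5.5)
The plan is to cost a single call of \textsc{Build} in \Cref{alg:build}, sum over nodes level by level, and then specialise the level sums to the two regimes.

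\textbf{Per-node cost.} At a node with $n_v$ rows, line~8 fits a mixture and evaluates a silhouette for each of the $K-1$ candidate orders. Each silhouette needs all pairwise distances in the $m$-dimensional subspace, which is $\Theta(n_v^2 m)$, and it cannot be cheaper because every pair enters the average. I will treat the mixture fit, at $O(n_v m^2)$ per EM iteration with a bounded iteration cap, as lower order on the assumption $m \lesssim n_v$. Lines~9--10 are one Ward fit at $O(n_v^2 m)$ with a standard nearest-neighbour-chain implementation, plus one more silhouette. Lines~2--3 are linear in $n_v$ times the column count. The per-node cost is therefore $\Theta(K n_v^2 m)$.

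\textbf{Summing over the tree.} Leaves at depth $d$ skip order selection because of line~4, so only $\ell = 0,\dots,d-1$ contribute. Leaves created because $|\Omega| < s_{\min}$ cannot occur below the root, since line~13 blocks them. The total is then the stated double sum $\Theta(Km\sum_{\ell<d}\sum_{v\in V_\ell} n_v^2)$, up to lower-order terms.

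\textbf{The two regimes.} Both rest on the disjointness of nodes within a level, taken from the proof of \Cref{prop:size}. That gives $\sum_{v\in V_\ell} n_v \le n$, hence $\sum_{v\in V_\ell} n_v^2 \le n^2$, and so the total is always $O(Kmdn^2)$.

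\emph{Degenerate regime.} I read ``some level dominated by a single node'' as every level $\ell<d$ containing a node with $n_v = \Theta(n)$, for example a chain that peels off only $s_{\min}$ rows each time with $d s_{\min} = o(n)$. Then each level sum is $\Theta(n^2)$ and the total is $\Theta(Kmdn^2)$. I will state this reading explicitly, because a single dominated level on its own only gives a lower bound of $\Omega(Kmn^2)$.

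\emph{Balanced regime.} With uniform branching $b$ and no discards, level $\ell$ has $b^\ell$ nodes of size $n/b^\ell$. The level sum is then $n^2 b^{-\ell}$, and $\sum_{\ell\ge0} b^{-\ell} = b/(b-1)$ gives the bound. If discards occur, the level sums only shrink, so the upper bound still holds.

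\textbf{Main obstacle.} The hardest part is justifying the $\Theta$ in the per-node cost. This means arguing that the GMM cost is lower order, which requires a fixed iteration cap and $m = O(n_v)$, and stating the implementation assumption that Ward costs $O(n_v^2 m)$. I will record these as explicit assumptions rather than derive them. A second delicate point is that the lower bound $\Theta$ in the degenerate case needs the ``every level'' reading above.
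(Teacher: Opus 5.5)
Your proposal is correct and follows the paper's proof essentially step for step: a per-node cost of $K-1$ silhouette evaluations at $\Theta(n_v^2 m)$ each, then a sum over levels $0,\dots,d-1$, then specialisation to the two regimes. Your ``every level has a node of size $\Theta(n)$'' reading is exactly the assumption the paper's proof makes (``at each level one node retains a constant fraction of the rows''), so you are right that the statement's ``some level'' is too weak for the $\Theta$. Your additions go beyond the paper, which simply asserts that silhouettes dominate: the disjointness bound $\sum_{v\in V_\ell} n_v^2 \le n^2$ showing $O(Kmd\,n^2)$ is the worst case, the Ward term, and the explicit mixture-fit assumptions. The one refinement is that a $k$-component full-covariance fit costs $O(n_v k m^2)$ per EM iteration, so summed over candidate orders the domination condition is $Km = O(n_v)$ with bounded iterations, rather than just $m \lesssim n_v$.
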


\begin{proof}
At a node of size $n_v$, order selection fits $K-1$ mixtures and evaluates $K-1$
silhouettes. A silhouette evaluation requires all pairwise distances over $n_v$
points in $m$ dimensions, hence $\Theta(n_v^2 m)$, which dominates the mixture
fits. Summing over nodes gives the stated total. For the degenerate case,
suppose at each level one node retains a constant fraction of the rows; that
node contributes $\Theta(n^2)$ to its level and there are $d$ levels. For the
balanced case with uniform branching $b$, level $\ell$ holds $b^\ell$ nodes of
size $n/b^\ell$, contributing $b^\ell (n/b^\ell)^2 = n^2 b^{-\ell}$; summing the
geometric series over $\ell \geq 0$ gives $n^2 b/(b-1)$.
\end{proof}

The two regimes differ by a factor of $d$ and diverge sharply with branching, as
\Cref{fig:cost} shows. The quadratic term is the binding constraint in both, and
it originates entirely in exact silhouette evaluation. Subsampling the
silhouette computation, substituting a linear-time validity index, or selecting
the order from the mixture's own likelihood criterion would each remove the
pairwise-distance term; none is implemented.

\begin{figure}[t]
\centering
\begin{tikzpicture}
\begin{axis}[
  width=0.74\textwidth, height=5.6cm,
  xlabel={depth $\ell$},
  ylabel={level cost $/\ Kmn^2$},
  xmin=0, xmax=5, ymin=0.0002, ymax=2,
  ymode=log,
  xtick={0,1,2,3,4,5},
  tick label style={font=\scriptsize}, label style={font=\small},
  legend style={font=\scriptsize, at={(0.03,0.04)}, anchor=south west,
                draw=gray!40, fill=white, row sep=1pt},
  grid=major, grid style={gray!18},
  axis lines=left,
]
\addplot[thick, black, dashed, mark=square, mark size=1.2pt]
  coordinates {(0,1)(1,1)(2,1)(3,1)(4,1)(5,1)};
\addlegendentry{degenerate: one dominant node per level}
\addplot[thick, black, mark=*, mark size=1.2pt]
  coordinates {(0,1)(1,0.5)(2,0.25)(3,0.125)(4,0.0625)(5,0.03125)};
\addlegendentry{balanced, branching $b=2$}
\addplot[thick, black, mark=triangle*, mark size=1.4pt, densely dotted]
  coordinates {(0,1)(1,0.2)(2,0.04)(3,0.008)(4,0.0016)(5,0.00032)};
\addlegendentry{balanced, branching $b=5$}
\end{axis}
\end{tikzpicture}
\caption{Level-wise construction cost from \Cref{prop:cost}, normalized by
$Kmn^2$. Curves are the analytic model evaluated at $b \in \{2,5\}$ and at the
degenerate case; no measurement is implied. Balanced descent decays
geometrically, so total cost is dominated by the root; degenerate descent pays
the root cost at every level.}
\label{fig:cost}
\end{figure}

\subsection{Output Structure}

\Cref{fig:tree} shows the shape of a constructed tree and where each payload
component attaches.

\begin{figure}[t]
\centering
\begin{tikzpicture}[x=1mm,y=1mm]

\node[nd,text width=44mm,fill=black!10] (r) at (40,0)
  {\textbf{root}\quad $\Dset_{v_0}=\Dset$\\
   {\scriptsize $Z_{v_0}\equiv 0$ by \Cref{eq:z}; $\Phi_{v_0}$ = population profile}};

\node[nd,text width=31mm] (a) at (0,-26)
  {segment A\\{\scriptsize $Z_A$: $f_3\,{+}1.6$, $f_7\,{-}1.2$}\\
   {\scriptsize\itshape Neutral}};
\node[nd,text width=31mm,fill=black!12] (b) at (40,-26)
  {segment B\\{\scriptsize $Z_B$: $f_1\,{+}1.8$, $f_4\,{+}1.4$}\\
   {\scriptsize\itshape Suppressing: $\argmin_v \delta(v)$}};
\node[nd,text width=31mm] (c) at (80,-26)
  {segment C\\{\scriptsize $Z_C$: $f_2\,{-}1.9$}\\
   {\scriptsize\itshape Amplifying: $\argmax_v \delta(v)$}};

\node[nd,text width=20mm] (b1) at (25,-50) {B.1};
\node[nd,text width=20mm] (b2) at (55,-50) {B.2};

\draw[flow] (r) -- (a);
\draw[flow] (r) -- (b);
\draw[flow] (r) -- (c);
\draw[flow] (b.south) -- (b1.north);
\draw[flow] (b.south) -- (b2.north);

\end{tikzpicture}
\caption{Segment tree structure. Effect-size rankings and contribution profiles
attach to internal nodes as well as leaves, so traversal halts at no
uninterpretable node. Feature labels and magnitudes shown are schematic
illustrations of payload structure and are not measurements.}
\label{fig:tree}
\end{figure}

\subsection{Implementation and Configuration}
\label{sec:impl}

The framework is implemented in Python. Estimators for the mixture fits,
agglomerative partitioning, surrogate models, and validity metrics come from
scikit-learn \citep{sklearn}; attributions use a Shapley implementation
\citep{shap}. Trees for distinct KPIs share no state and are constructed
concurrently. Construction is triggered synchronously by an API request and
completes in-request, with no memoization between requests: identical inputs
rebuild the tree in full.

\Cref{tab:params} lists the governing parameters and their deployed values.
Every one is fixed at engine construction and identical across tenants,
datasets, and KPIs. None is exposed for tuning and none is derived from a
calibration procedure.

\begin{table}[t]
\centering
\caption{Governing parameters and deployed values. None is exposed for
per-dataset tuning; none is empirically calibrated.}
\label{tab:params}
\begin{tabular}{llcl}
\toprule
Parameter & Symbol & Value & Role \\
\midrule
Depth limit            & $d_{\max}$ & 5    & Termination; bounds $|V|$ via \Cref{prop:size} \\
Minimum segment size   & $s_{\min}$ & 10   & Termination; source of \Cref{eq:partial} \\
Order-search width     & $k_{\max}$ & 10   & Candidate branching factors \\
Collinearity bound     & $\rho$     & 0.8  & Redundancy filter, \Cref{alg:select} \\
Attribution sample     & $N_\phi$   & 1000 & Caps selection cost in $n$ \\
Task-type cutoff       & $\nu$      & 10   & Regression versus classification surrogate \\
Ordinal cutoff         & ---        & 25   & Selects the form of $\Phi_v$ \\
Stance threshold       & $\theta$   & 0.02 & \Cref{eq:stance} \\
\bottomrule
\end{tabular}
\end{table}

Determinism holds under fixed configuration. Mixture initialization, surrogate
fitting, and attribution sampling are seeded, and Ward linkage is deterministic
given its input, so repeated construction over an unchanged table reproduces an
identical tree. That property is a precondition for the validation work of
\Cref{sec:limitations} rather than a result in itself.

\section{Limitations and Future Work}
\label{sec:limitations}

\paragraph{No outcome validation.} The framework has not been evaluated against
any measure of segmentation quality, explanation fidelity, or utility to an
analyst, and this paper makes no such claim. Validation would require, at
minimum: internal validity indices aggregated across nodes and reported by
depth; recovery measured against synthetic populations with planted structure;
a held-out test of whether a node's top-ranked effect-size features predict
membership better than a random-feature control; and a blinded rating exercise
in which domain experts judge node explanations for correctness and
actionability. The implementation computes a silhouette score at every node and
discards it without aggregation, so the cheapest of these requires instrumentation
rather than new algorithmic work.

\paragraph{The discriminating comparison is unrun.} \Cref{sec:position} claims
that cluster-based explanation describes segments predicate search cannot. The
experiment that would settle it constructs a population containing both
axis-aligned segments and segments defined by combinations of continuous
tendencies, and measures recovery of each class separately for SEST and for a
predicate-based baseline \citep{scorpion,diff}. Until that is run, the
positioning of \Cref{tab:priorwork} rests on the representational argument
alone.

\paragraph{Subspace selection does not recur.} \Cref{alg:select} runs once per
KPI at the root and fixes $F$ for the entire tree, so a feature discriminative
only inside an already-isolated subpopulation is unreachable at every depth.
This is precisely the situation deep traversal exists to expose. Re-selecting
per node costs one attribution pass per node, and by line~6 of \Cref{alg:select}
that pass is $O(1)$ in $n$, so the change is affordable; its effect on
explanation quality at depth is unmeasured.

\paragraph{Effect sizes reference the population, not the parent.} \Cref{eq:z}
standardizes against $\Dset$ at every depth. A node at depth four therefore
reports a mixture of deviation inherited from its ancestors and deviation its
own split produced, and the two are not separable from the output. A
parent-referenced variant is a one-line change with a large effect on how deep
nodes read.

\paragraph{Coverage is unreported.} By \Cref{rem:coverage} the row fraction
surviving to a given depth is neither bounded by the construction nor recorded
in the output. Reporting it per level is straightforward and would materially
change how leaf-level statements should be interpreted.

\paragraph{Configuration is uncalibrated.} The eight parameters of
\Cref{tab:params} are fixed constants applied identically to every dataset. Each
is a clean ablation axis, and none has been swept. The mean-attribution cutoff
at line~9 of \Cref{alg:select} and the order-selection asymmetry of
\Cref{rem:asymmetry} are the two whose justification is weakest.

\paragraph{Statistical informality of the stance layer.} \Cref{eq:stance}
screens on a mean difference against a proportional threshold, applied across
every node at a depth without correction for multiplicity. False positives
should be expected at a rate growing with the width of the level. The
quality-measure theory developed in the subgroup-discovery literature
\citep{sd,emm} offers principled replacements.

\paragraph{Encoding of nominal attributes.} Ordinal encoding imposes an order
and a distance geometry on attributes that carry neither
(\Cref{sec:data}). A mixed-type distance would remove the artifact; the cost to
current output is unquantified.

\paragraph{Recomputation and freshness.} Construction rebuilds the full tree per
request with no memoization or incremental maintenance, which trades repeated
quadratic work for operational simplicity and guaranteed freshness. Incremental
maintenance under append-only updates is open.

\paragraph{Composition with predicate methods.} The most promising extension the
trade-off of \Cref{sec:position} suggests is a hybrid: cluster to discover a
segment, then fit a compact predicate to describe it, recovering exactness and
executability while retaining the wider class of discoverable segments. Nothing
in the present architecture forecloses it.

\section{Conclusion}
\label{sec:conclusion}

We presented Self-Explaining Segment Trees, an architecture for KPI root-cause
decomposition that represents an explanation as a multivariate cluster in a
learned subspace rather than as a predicate over raw attributes. The framework
selects that subspace once per KPI from Shapley attributions over a surrogate
model, partitions recursively while choosing the branching factor independently
at every node, discards segments too small to characterize, and attaches
standardized effect sizes and type-dependent contribution profiles to every node
at construction time, computed against untransformed data so that surfaced
values carry the source system's own units and labels. A stance layer reduces
any depth to its extremal KPI-suppressing and KPI-amplifying segments, and a
persisted traversal record makes the tree navigable across sessions.

The structural results characterize what the construction guarantees.
\Cref{prop:size} bounds node count by the depth limit and the minimum segment
size and establishes termination, which makes serialized size predictable from
configuration rather than from data. \Cref{prop:cost} locates the cost in exact
silhouette evaluation and separates a degenerate regime paying the root cost at
every level from a balanced regime whose cost decays geometrically with depth.
\Cref{rem:coverage} states what the guarantees do not cover: the construction
admits a coverage deficit it neither bounds nor reports.

The open question is representational and empirical at once. In transactional
business data, how large is the class of KPI-driving segments that no compact
attribute predicate delimits? Where that class is small the predicate
formulation is sufficient and the architecture described here is redundant.
Where it is substantial, cluster-based explanation reaches segments a mature
literature cannot express. \Cref{sec:limitations} specifies the comparison that
decides it, and states the outcome validation this work leaves open.


\end{document}